\newcommand{\logic}[1]{\mathsf{#1}}
\newcommand{\FO}{\logic{FO}}
\newcommand{\FOt}{\FO^{2}}
\newcommand{\oinvFOt}{{\lesseq}\text{-inv}~\FOt}
\newcommand{\complexityclass}[1]{\textsc{#1}} 
\newcommand{\PTime}{\complexityclass{PTime}} 
\newcommand{\NExpTime}{\complexityclass{NExpTime}} 
\newcommand{\coNExpTime}{\textrm{co}\complexityclass{NExpTime}} 
\newcommand{\coTwoNExpTime}{\textrm{co}\complexityclass{N2ExpTime}} 
\newcommand{\str}[1]{{\mathfrak{#1}}}
\renewcommand{\restriction}{\mathord{\upharpoonright}}
\newcommand{\restr}[2]{#1\restriction_{#2}} 
\newcommand{\relsymbol}[1]{\mathrm{#1}}
\newcommand{\relsymbolP}{\relsymbol{P}}
\newcommand{\domelem}[1]{\mathrm{#1}}                           
\newcommand{\domelemd}{\domelem{d}}                             
\newcommand{\domeleme}{\domelem{e}}                             
\newcommand{\var}[1]{\mathit{#1}}       
\newcommand{\varx}{\var{x}}             
\newcommand{\vary}{\var{y}}             
\newcommand{\vartuplex}{\vec{\varx}}    
\newcommand{\lesseq}{\preceq}
\newcommand{\AAA}{\mbox{\large \boldmath $\alpha$}}
\newcommand{\BBB}{\mbox{\large \boldmath $\beta$}}
\newcommand{\tp}[2]{{\rm tp}^{\str{#1}}({#2})}
\newtheorem{theorem}{Theorem}
\newtheorem{lemma}[theorem]{Lemma}
\newtheorem{corollary}[theorem]{Corollary}
\newtheorem{fact}[theorem]{Fact}
\begin{document}

\begin{frontmatter}
  \title{Order-Invariance of Two-Variable Logic is $\coNExpTime$-complete}

  \author{Bartosz~Bednarczyk}
  \ead{bartosz.bednarczyk@cs.uni.wroc.pl}
  \address{Computational Logic Group, TU Dresden \& Institute of Computer Science, University of Wrocław}

  \begin{abstract}
    \noindent 
    We establish $\coNExpTime$-completeness of the problem of deciding order-invariance of a given two variable first-order formula, improving and significantly simplifying $\coTwoNExpTime$ bound by Zeume and Harwath. 
  \end{abstract}

  \begin{keyword}
    satisfiability \sep complexity \sep order-invariance 
    \sep classical decision problem \sep two-variable logic
  \end{keyword}
\end{frontmatter}


\section{Introduction}\label{sec:introduction}
The main goal of finite model theory is to understand formal languages describing finite structures: their complexity and their expressive power.
Such languages are ubiquitous in computer science, starting from descriptive complexity, where they are used to provide machine-independent characterisations of complexity classes, and ending up on database theory and knowledge-representation, where formal languages serve as fundamental querying formalism.
A classical idea in finite model theory is to employ invariantly-used relations, capturing data-independence principle in databases: it makes sense to give queries the ability to exploit the presence of the order in which the data is stored in the memory but at the same time we would like to make query results independent of this specific ordering.
It is not immediately clear that the use of invariantly-used linear order in first-order logic ($\FO$) allow us to gain anything expressivity-wise. 
And as soon as we stick to the arbitrary (\ie not necessarily finite) structures it does not, which is a direct consequence of $\FO$ having the Craig Interpolation Property.
However, as it was first shown by Gurevich~\cite[Thm.~5.3]{Libkin04}, the claim holds true over finite structures: order-invariant $\FO$ is more expressive than plain~$\FO$.

Unfortunately, order-invariant $\FO$ is poorly understood. 
As stated in~\cite{BarceloL16} one of the reasons why the progress in understanding order-invariance is rather slow is the lack of logical toolkit.
The classical model-theoretic methods based on types were proposed only recently~\cite{BarceloL16}, and the order-invariant $\FO$ is not even a logic in the classical sense: its syntax is undecidable. 
Moreover, the availability of locality-based methods is limited: order-invariant $\FO$ is known to be Gaifman-local~\cite[Thm. 2]{GroheS00} but the status of an analogous of Hanf-locality from $\FO$ is open.
This suggest that a good way to understand order-invariant $\FO$ is to first look at its fragments, \eg the fragments with a limited~number~of~variables.

\paragraph*{Our contribution}\label{subsec:our-contribution}
We continue the research on the two-variable fragment $(\FOt)$ of order-invariant $\FO$, initiated in~\cite{ZeumeH16}.
It was shown, in contrast to the full $\FO$, that checking order-invariance of an input $\FOt$ is decidable in $\coTwoNExpTime$, see:~\cite[Thm.~12]{ZeumeH16}.
We provide a tight bound for the mentioned problem, showing that deciding order-invariance for $\FOt$ is $\coNExpTime$-complete. 
Our proof method relies on establishing exponential-size counter-example for order-invariance and is \emph{surprisingly easy}. 


\section{Preliminaries}\label{sec:preliminaries}
\noindent We employ standard terminology from finite model theory, assuming that the reader is familiar with the syntax and the semantics of first-order logic ($\FO$)~\cite[Sec.~2.1]{Libkin04}, basics on computability and complexity~\cite[Secs.~2.2--2.3]{Libkin04}, and order-invariant queries~\cite[Secs.~5.1--5.2]{Libkin04}.
In what follows, $\FOt$ denotes the set of all $\FO$ sentences employing only the variables~$\varx, \vary$.

\paragraph*{Structures}\label{para:intro-structures}
Structures are denoted with fraktur letters $\str{A}, \str{B}$ and their domains are denoted with the corresponding Roman letters $A, B$.
We assume that structures have non-empty, \emph{finite} domains, and are over some purely-relational vocabulary.
For paper-specific reasons we employ three distinguished symbols $\lesseq, \lesseq_0, \lesseq_1$ that are interpreted as linear orders (\ie a reflexive, antisymmetric, transitive and total relation).
We write $\FO[\Theta]$ for $\Theta \subseteq \{ \lesseq, \lesseq_0, \lesseq_1 \}$ to indicate that only distinguished symbols from $\Theta$ may appear in $\varphi$, and $\varphi[{\lesseq}{/}{\lesseq_i}]$ to denote the formula obtained from $\varphi$ by replacing each occurrence of $\lesseq$ with~$\lesseq_i$.
We write~$\varphi(\vartuplex)$ to indicate that all free variables of~$\varphi$ are in~$\vartuplex$.
A~sentence is a formula without free variables.
With $\restr{\str{A}}{S}$ we denote the substructure of the structure $\str{A}$ restricted to the set $S \subseteq A$.

\paragraph*{Types}\label{para:intro-types}
An (atomic) \emph{$1$-type} over $\tau$ is a maximal satisfiable set of atoms or negated atoms from $\tau$ with a free variable~$\varx$. 
Similarly, an (atomic) \emph{$2$-type} over $\tau $ is a maximal satisfiable set of atoms or negated atoms with free variables~$\varx,\vary$. 
Note that the total number of atomic $1$- and $2$-types over $\tau$ is bounded exponentially in~$|\tau|$. 
We often identify a type with the conjunction of all its elements. 
The set of $1$-types and $2$-types over the signature consisting of symbols appearing in $\varphi$ is denoted with $\AAA_\varphi$ and $\BBB_\varphi$.
Given a structure $\str{A}$ and its element~$\domelemd \in A$ we say that $\domelemd$ \emph{realises} a $1$-type $\alpha$ if $\alpha$ is the unique $1$-type such that $\str{A} \models \alpha[d]$.
Similarly, for distinct $\domelemd, \domeleme \in A$, we denote by $\tp{A}{\domelemd, \domeleme}$ the unique $2$-type \emph{realised} by the pair $(\domelemd, \domeleme)$, \ie the $2$-type $\beta$ such that~$\str{A} \models \beta[\domelemd, \domeleme]$.

\paragraph*{Decision problems}\label{para:intro-decision-problems}
The \emph{finite satisfiability} (resp. \emph{validity}) \emph{problem} for a logic $\logic{L}$ is the problem of deciding whether an input sentence $\varphi$ from $\logic{L}$ is satisfied in some (resp. all) finite structures.
Recall that the finite satisfiability and validity for $\FO$ are undecidable~\cite{turing1938computable,church1936note}, while for $\FOt$ are respectively $\NExpTime$-complete and $\coNExpTime$-complete~\cite[Thm.~5.3]{GradelKV97}\cite[Thm.~3]{Furer83}.
Note that $\varphi$ is finitely valid iff $\neg \varphi$ is finitely unsatisfiable.

\paragraph*{Order-invariance}\label{para:intro-order-inv}
A sentence $\varphi$ is \emph{order-invariant} (or: $\lesseq$-invariant) if for all finite $\tau$-structures~$\str{A}$ and all (linear-order!) interpretations $\lesseq^{\str{A}}, (\lesseq^{\str{A}})'$ of $\lesseq$ over $A$ we have that $(\str{A}, \lesseq^{\str{A}}) \models \varphi$ iff $(\str{A}, (\lesseq^{\str{A}})') \models \varphi$.
With $\oinvFOt$ we denote the set of all $\lesseq$-invariant $\FO^2$ sentences.
Note that $\varphi$ is \emph{not} order-invariant if there is an $\str{A}$ and two linear orders $\lesseq_0, \lesseq_1$ on $A$ such that $(\str{A}, \lesseq_0) \models \varphi$ and $(\str{A}, \lesseq_1) \not\models \varphi$. 
Deciding if an $\FO$ sentence is order-invariant is undecidable~\cite[Ex.~9.3]{Libkin04}. 
Checking order-invariance for $\FOt$ formulae was shown to be in $\coTwoNExpTime$ in~\cite[Thm.~12]{ZeumeH16}.\footnote{The authors of~\cite{ZeumeH16} incorrectly stated the complexity in their Thm. 12, mistaking ``invariance'' with ``non-invariance''.}


\section{Deciding Order-Invariance}\label{sec:complexity}
We study the complexity of the problem of deciding if an input formula $\varphi[\lesseq] \in \FO^2$ is order-invariant.
We start from the lower bound first. We consider the following program, inspired by~\cite[Slide 9]{Schweikardt13Slides}.

\begin{algorithm}[h]
  \DontPrintSemicolon
  \KwData{An $\FOt[\emptyset]$-formula $\varphi$.}
  \caption{From validity to $\lesseq$-invariance}\label{algo:a-reduction-from-validity}

  If $\neg\varphi$ has a model with a single-element domain, \textbf{return} \texttt{False}. \tcp*{A corner case}

  Let $\psi_{\lesseq} := \exists{\varx} \left( \relsymbolP(\varx) \land \forall{\vary} (\vary \lesseq \varx) \right)$ for $\relsymbolP \not\in \tau$.\tcp*{not $\lesseq$-inv. on structs. with ${\geq} 2$ elements!}

  \textbf{Return} \texttt{True} if $(\neg \varphi) \to \psi_\lesseq$ is $\lesseq$-invariant and \texttt{False} otherwise. \tcp*{Actual reduction.}
\end{algorithm}

The above program Turing-reduces finite $\FOt$-validity to testing order-invariance of $\FOt$-sentences.
It is straightforward to see that the presented reduction is correct, so we we left the details for the reader.
From the complexity bounds on the finite validity problem for $\FOt$~\cite[Thm.~3]{Furer83} we conclude:

\begin{corollary}\label{lemma:correctness-of-the-reduction}
    Procedure~\ref{algo:a-reduction-from-validity} returns \texttt{True} iff its input is finitely valid. 
    Hence, testing whether an $\FOt$ formula is order-invariant is $\coNExpTime$-hard.
\end{corollary}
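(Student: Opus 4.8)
The plan is to verify the two assertions separately: first that the procedure returns \texttt{True} exactly when its input $\varphi$ is finitely valid, and then that this yields $\coNExpTime$-hardness. Throughout I would exploit three facts. Since $\varphi \in \FOt[\emptyset]$ mentions no order symbol, the truth of $\varphi$ (and of $\neg\varphi$) on a structure is completely independent of how $\lesseq$ is interpreted. Since $\relsymbolP \notin \tau$, the symbol $\relsymbolP$ may be interpreted freely on any model of $\neg\varphi$. And $(\neg\varphi) \to \psi_{\lesseq}$ is logically equivalent to $\varphi \lor \psi_{\lesseq}$. The crux is that on any structure satisfying $\neg\varphi$ the implication collapses to $\psi_{\lesseq}$, whose order-(non-)invariance can be dictated through the interpretation of $\relsymbolP$.

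For the easy direction, suppose $\varphi$ is finitely valid, \ie $\neg\varphi$ is finitely unsatisfiable. Then $\neg\varphi$ has in particular no single-element model, so Step~1 does not fire, and $(\neg\varphi) \to \psi_{\lesseq}$ is vacuously true on every finite structure under every interpretation of $\lesseq$, hence trivially $\lesseq$-invariant; the procedure returns \texttt{True}, as required.

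For the converse I would argue contrapositively. Assume $\varphi$ is not finitely valid, so $\neg\varphi$ has a finite model. If it has a single-element model, Step~1 already returns \texttt{False}. Otherwise, fix a model $\str{A} \models \neg\varphi$ with $|A| \geq 2$, pick one element $a \in A$, and interpret the fresh symbol by $\relsymbolP^{\str{A}} = \{a\}$. I would then exhibit two linear orders witnessing non-invariance: an order $\lesseq_0$ in which $a$ is maximal, so that $\psi_{\lesseq_0}$ holds, and an order $\lesseq_1$ in which some $b \neq a$ is maximal, so that $\psi_{\lesseq_1}$ fails. Because $\neg\varphi$ holds in $\str{A}$ regardless of the order, the implication is true under $\lesseq_0$ but false under $\lesseq_1$, so it is not $\lesseq$-invariant and Step~3 returns \texttt{False}. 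The step I expect to require the most care is the justification of the corner case that this construction relies on: a single-element domain carries a \emph{unique} linear order, so non-invariance can never be witnessed there, and single-element counterexamples to validity must therefore be caught separately in Step~1. The genuinely delicate point is thus the interplay whereby the freshness of $\relsymbolP$ gives exactly the freedom to toggle $\psi_{\lesseq}$ on the $\geq 2$-element models of $\neg\varphi$, while single-element models demand a dedicated test; everything else is bookkeeping.

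Finally, for the complexity claim I would note that $(\neg\varphi)\to\psi_{\lesseq}$ is computable in polynomial (indeed linear) size from $\varphi$, and that the Step~1 test lies in $\NP$: a single-element $\tau$-structure is determined by one bit per relation symbol, namely the truth value of its unique diagonal atom, so one guesses these $O(|\tau|)$ bits and evaluates $\neg\varphi$, whose quantifiers then range over a singleton, in polynomial time. Hence the procedure is a polynomial-time Turing reduction from finite $\FOt$-validity to the order-invariance problem for $\FOt$, its only oracle call being to order-invariance and its only remaining computation lying in $\NP$. Since finite $\FOt$-validity is $\coNExpTime$-complete~\cite[Thm.~3]{Furer83}, the $\coNExpTime$-hardness of order-invariance follows.
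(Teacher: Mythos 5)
Your proof is correct and takes exactly the approach the paper intends: the paper declares the correctness of Procedure~\ref{algo:a-reduction-from-validity} ``straightforward'' and leaves the details to the reader, and your case analysis (vacuous truth of the implication when $\varphi$ is finitely valid; toggling $\psi_{\lesseq}$ by interpreting $\relsymbolP$ as a singleton and choosing two different maximal elements when $\neg\varphi$ has a model with at least two elements, with the single-element case---where the linear order is unique and non-invariance can never be witnessed---caught by Step~1) supplies precisely those details before concluding via F\"urer's $\coNExpTime$-completeness of finite $\FOt$-validity. If anything, you are more careful than the paper: you observe that Step~1 is an $\NP$ test (propositional satisfiability in disguise), so the procedure is a Turing reduction containing a nondeterministic subroutine rather than a purely deterministic one---a subtlety that the paper's bare ``Turing-reduces'' glosses over.
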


\noindent The upper bound relies on the following fact that follows directly from the~definition~of~order-invariance.
\begin{fact}\label{fact:from-inv-to-sat}
An $\FOt[\lesseq]$ formula $\varphi$ is \emph{not} order-invariant iff the $\FOt[\lesseq_0, \lesseq_1 ]$ formula $\varphi[{\lesseq}{/}{\lesseq_0}] \land \neg \varphi[{\lesseq}{/}{\lesseq_1}]$ is finitely-satisfiable over structures interpreting $\lesseq_0$ and $\lesseq_1$ as linear orders over the domain.
\end{fact}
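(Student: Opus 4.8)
The plan is to derive the statement directly from the definition of order-invariance, the only real content being a routine substitution lemma. By the remark following the definition of order-invariance, $\varphi$ is \emph{not} order-invariant exactly when there exist a finite $\tau$-structure $\str{A}$ and two linear orders $\lesseq_0, \lesseq_1$ on its domain $A$ such that $(\str{A}, \lesseq_0) \models \varphi$ and $(\str{A}, \lesseq_1) \not\models \varphi$. Hence it suffices to show that such a triple $(\str{A}, \lesseq_0, \lesseq_1)$ exists if and only if $\varphi[{\lesseq}{/}{\lesseq_0}] \land \neg\varphi[{\lesseq}{/}{\lesseq_1}]$ has a finite model interpreting $\lesseq_0, \lesseq_1$ as linear orders.

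The key step is the following substitution lemma. Let $\str{B}$ be a structure over $\tau \cup \{\lesseq_0, \lesseq_1\}$ in which $\lesseq_i^{\str{B}}$ is a linear order, and let $\str{A} := \restr{\str{B}}{\tau}$ be the reduct of $\str{B}$ to $\tau$. Then $\str{B} \models \varphi[{\lesseq}{/}{\lesseq_i}]$ iff $(\str{A}, \lesseq_i^{\str{B}}) \models \varphi$. This is proved by a straightforward induction on the structure of $\varphi$: the substitution $[{\lesseq}{/}{\lesseq_i}]$ merely renames each atom $\varx \lesseq \vary$ into $\varx \lesseq_i \vary$, and these two atoms receive the same truth value since $\lesseq_i$ is interpreted in $\str{B}$ exactly as $\lesseq$ is in $(\str{A}, \lesseq_i^{\str{B}})$; all remaining atoms, together with the Boolean connectives and the quantifiers, are untouched. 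Note that $\lesseq$ itself does not occur in $\varphi[{\lesseq}{/}{\lesseq_i}]$, so the enlarged vocabulary $\tau \cup \{\lesseq_0, \lesseq_1\}$ suffices.

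For the forward direction, given a witness $(\str{A}, \lesseq_0, \lesseq_1)$ I would take the single expansion $\str{B}$ of $\str{A}$ to $\tau \cup \{\lesseq_0, \lesseq_1\}$ that interprets \emph{both} order symbols simultaneously by $\lesseq_0$ and $\lesseq_1$, so that $\restr{\str{B}}{\tau} = \str{A}$. Applying the lemma with $i = 0$ and with $i = 1$ yields $\str{B} \models \varphi[{\lesseq}{/}{\lesseq_0}]$ and $\str{B} \models \neg\varphi[{\lesseq}{/}{\lesseq_1}]$, so $\str{B}$ is the desired finite model. The converse is symmetric: from any finite model $\str{B}$ of the conjunction with $\lesseq_0^{\str{B}}, \lesseq_1^{\str{B}}$ linear orders, the reduct $\str{A} := \restr{\str{B}}{\tau}$ together with $\lesseq_0^{\str{B}}$ and $\lesseq_1^{\str{B}}$ is a witness for non-invariance, again by the lemma.

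I do not expect any genuine obstacle, as the claim is essentially a reformulation of the definition. The only point deserving a word of care is the use of a \emph{single} combined structure carrying both orders over the enlarged vocabulary $\{\lesseq_0, \lesseq_1\}$: this is what lets one model of the conjunction encode the two interpretations of $\varphi$ at once, matching the two linear orders appearing in the definition of non-invariance.
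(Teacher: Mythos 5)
Your proposal is correct and matches the paper's approach: the paper treats this Fact as following directly from the definition of order-invariance, and your argument is exactly that unfolding — pairing the witness structure with both orders into one expansion over $\{\lesseq_0,\lesseq_1\}$, justified by the routine renaming/substitution lemma. The only cosmetic remark is that the paper's notation $\restr{\str{B}}{S}$ is defined for domain restriction, not vocabulary reducts, so your use of $\restr{\str{B}}{\tau}$ is a harmless overloading worth flagging.
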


Let $\FOt_-[\lesseq_0, \lesseq_1]$ be composed of all sentences of the shape $\varphi[{\lesseq}{/}{\lesseq_0}] \land \neg \varphi[{\lesseq}{/}{\lesseq_1}]$ with $\varphi \in \FOt[\lesseq]$. 
We stress that we always assume that $\lesseq$ symbols are interpreted as linear orders over the domain.
To simplify the reasoning about such formulae, we first reduce them to Scott-like \emph{normal forms}, \cf~\cite[\textsection4]{GradelKV97}, \cite[Sec.~3.1]{Otto01}.
By applying \cite[Lemma 1]{ZeumeH16} to $\varphi[{\lesseq}{/}{\lesseq_0}]$ and $\neg \varphi[{\lesseq}{/}{\lesseq_1}]$, and taking their conjunction, we~infer:
%
\begin{corollary}\label{lemma:normalform}
  For any $\FOt_-[\lesseq_0, \lesseq_1]$ formula there is an equi-satisfiable, linear-time computable formula (over an extended signature) having the form: 
  \[
    \bigwedge_{i=0}^{1} \left( \forall{\varx}\forall{\vary} \; \chi_i(\varx, \vary) \land \bigwedge\limits_{j=1}^{m_i} \forall{\varx}\exists{\vary} \ \gamma_i^j(\varx, \vary) \right),
  \]
  where the decorated  $\chi$ and $\gamma$ are quantifier-free and the symbols $\lesseq_i$ do not appear in $\chi_{1-i}$ and $\gamma_{1-i}^{j}$. 
\end{corollary}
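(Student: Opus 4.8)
The plan is to invoke the Scott-style normalisation of~\cite[Lemma~1]{ZeumeH16} twice, once for each conjunct, and then argue that the two resulting sentences may be safely conjoined. Recall that \cite[Lemma~1]{ZeumeH16} turns an arbitrary $\FOt$ sentence into an equi-satisfiable one of the shape $\forall{\varx}\forall{\vary}\,\chi(\varx,\vary) \land \bigwedge_j \forall{\varx}\exists{\vary}\,\gamma_j(\varx,\vary)$, computable in linear time, where the transformation introduces only \emph{fresh unary} predicates (used to name quantified subformulae) and never alters the interpretation of the relation symbols already present. In particular it adds no new binary symbols, so the distinguished order symbols occurring in the output are exactly those occurring in the input, and a structure interpreting $\lesseq_0,\lesseq_1$ as linear orders is mapped to (and comes from) such a structure.

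First I would apply this normalisation separately to the two conjuncts $\varphi[{\lesseq}{/}{\lesseq_0}]$ and $\neg\varphi[{\lesseq}{/}{\lesseq_1}]$ of the given $\FOt_-[\lesseq_0,\lesseq_1]$ sentence, taking care to use two \emph{disjoint} sets of fresh unary predicates. This yields sentences $\Phi_0$ and $\Phi_1$ of the required universal-plus-existential form, with the $\chi_i$ and $\gamma_i^j$ quantifier-free. Since $\varphi[{\lesseq}{/}{\lesseq_0}]$ mentions $\lesseq_0$ but not $\lesseq_1$, and the normalisation introduces no order symbols, $\Phi_0$ mentions only $\lesseq_0$ among the distinguished symbols; symmetrically $\Phi_1$ mentions only $\lesseq_1$. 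This is exactly the separation asserted in the statement, namely $\lesseq_i \notin \chi_{1-i},\gamma_{1-i}^j$. Conjoining $\Phi_0$ and $\Phi_1$ and reindexing the existential conjuncts produces precisely the displayed formula, and the entire computation stays linear in the size of the input.

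The one point that needs a genuine, if short, argument — and the only place anything can go wrong — is that equi-satisfiability is preserved when the two normalised conjuncts are combined, since equi-satisfiability does \emph{not} compose under conjunction in general. The hard part, then, is verifying that it does compose here, and it does precisely because the fresh predicates are disjoint. For the forward direction, any model (with $\lesseq_0,\lesseq_1$ linear orders) of $\varphi[{\lesseq}{/}{\lesseq_0}] \land \neg\varphi[{\lesseq}{/}{\lesseq_1}]$ can be expanded independently by the two disjoint families of fresh predicates: the expansion witnessing $\Phi_0$ touches only the first family and leaves the second free, and vice versa, so both expansions can be performed simultaneously to yield a model of $\Phi_0 \land \Phi_1$. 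For the converse, any model of $\Phi_0 \land \Phi_1$ reduces, upon forgetting all fresh predicates, to a structure satisfying $\varphi[{\lesseq}{/}{\lesseq_0}]$ (because it satisfies $\Phi_0$) and $\neg\varphi[{\lesseq}{/}{\lesseq_1}]$ (because it satisfies $\Phi_1$), hence satisfying the original sentence, with $\lesseq_0,\lesseq_1$ still interpreted as linear orders. This establishes equi-satisfiability over the intended class of structures and completes the argument.
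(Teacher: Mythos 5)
Your proposal is correct and follows essentially the same route as the paper, which likewise applies the normalisation of~\cite[Lemma~1]{ZeumeH16} separately to $\varphi[{\lesseq}{/}{\lesseq_0}]$ and $\neg\varphi[{\lesseq}{/}{\lesseq_1}]$ and then conjoins the results; the paper leaves the justification at that one sentence, whereas you additionally spell out the two points it relies on implicitly (that the fresh unary predicates can be chosen disjoint so that model expansions for the two conjuncts compose, and that no new order symbols are introduced, giving the $\lesseq_i \notin \chi_{1-i},\gamma_{1-i}^j$ separation). Your elaboration of why equi-satisfiability survives the conjunction is exactly the right point to make explicit and matches the intended argument.
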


Given a model $\str{A} \models \varphi$ of $\varphi$ in normal form and elements $\domelemd, \domeleme \in A$ witnessing $\str{A} \models \gamma_i^j(\domelemd, \domeleme)$, we call $\domeleme$ a $\gamma_i^j$-\emph{witness} for $\domelemd$ (or simply a witness).

The core of the paper is the following small model theorem, with a proof relying on the circular witnessing scheme by Grädel-Kolaitis-Vardi~\cite[Thm.~4.3]{GradelKV97}. 

\begin{lemma}\label{lemma:small-model-property}
Any finitely satisfiable $\varphi \ {\in}\ \FOt_-[\lesseq_0, \lesseq_1]$ has a model with $\mathcal{O}(|\varphi|^3 \cdot 2^{|\varphi|})$ elements.
\end{lemma}
\begin{proof}
W.l.o.g. we assume that $\varphi$ is in the normal form from Corollary~\ref{lemma:normalform} and put $M := \max{(m_0,m_1)}$. 
Let $\str{A} \models \varphi$. If $|A| \leq 224\ |\varphi|^3 \cdot 2^{|\varphi|}$ then we are done, so assume otherwise.
We are going to construct a new model $\str{B}$ having the domain $W_0 \cup W_1 \cup W_2 \cup W_3$, where the sets $W_i$ are constructed below.

Call $1$-type \emph{rare} if it has at most $32M$ realisations in~$\str{A}$.
Let $S$ be composed of all elements of~$\str{A}$ of rare $1$-types, and of the $8M$ minimal and $8M$ maximal (w.r.t. each $\lesseq_0^{\str{A}}$, $\lesseq_1^{\str{A}}$) realisations of each non-rare $1$-type in $\str{A}$. 
We make $W_0$ to be composed of all elements realising rare-types, as well as $M$ minimal and $M$ maximal (w.r.t. $\lesseq_0^{\str{A}}$ and $\lesseq_1^{\str{A}}$) realisations of each non-rare $1$-type in $\str{A}$. 
Put the rest of elements of $S$ to $W_1$.
We have $|W_0 \cup W_1| \leq 32M \cdot |\AAA_{\varphi}|$.

The idea behind $W_0$ is that this set contains ``dangerous'' elements, \ie the ones for which $\restr{\str{A}}{W_0}$ may be uniquely determined by $\varphi$.
Elements from $W_1$ will help to restore the satisfaction of $\forall\exists$ conjuncts.

Finally, we close $W_0 \cup W_1$ twice under witnesses.
More precisely, let $W_2$ be any $\subseteq$-minimal subset of $A$ so that all elements from $W_0 \cup W_1$ have all the required $\gamma_i^j$-witnesses in $W_0 \cup W_1 \cup W_2$.
  Similarly, we define $W_3$ to be any $\subseteq$-minimal subset of $A$ so that all elements from $W_0 \cup W_1 \cup W_2$ have all the required $\gamma_i^j$-witnesses in $W_0 \cup W_1 \cup W_2 \cup W_3$.
Observe that $|W_2| \leq 2M |W_0 \cup W_1| \leq 2M \cdot 32 M |\AAA_{\varphi}| = 64 M^2 |\AAA_{\varphi}|$
  and $|W_3| \leq 2M |W_2| \leq 128 M^3 |\AAA_{\varphi}|$ hold.

Let $\str{B} := \restr{\str{A}}{W_0 \cup W_1 \cup W_2 \cup W_3}$.
We see that:
\begin{align*}
|B| \leq |W_0 \cup W_1| + |W_2| + |W_3| \leq 
(32M + 64M^2 +\\ + 128M^3) |\AAA_\varphi| \leq
224 M^3 |\AAA_\varphi| \leq 224\ |\varphi|^3 \cdot 2^{|\varphi|}.
\end{align*}

Universal formulae are preserved under substructures, thus $\lesseq_1^{\str{B}}, \lesseq_2^{\str{B}}$ are linear orders over $B$ and $\str{B}$ satisfies $\forall\forall$-conjuncts of $\varphi$.
Hence, the only reason for $\str{B}$ to not be a model of $\varphi$ is the lack of required $\gamma_i^j$-witnesses for elements from $W_3$. 
We fix this by reinterpreting relations between $W_3$ and $W_1$.

Before we start, we are going to collect per each non-rare $1$-type $\alpha$, pairwise-disjoint sets of $M$ minimal and $M$ maximal (w.r.t. each $\lesseq_0^{\str{A}}$, $\lesseq_1^{\str{A}}$) realisations of $\alpha$ from $W_1$.
Formally:
Fix a non-rare $\alpha$.
Let $V_\alpha^{0}$ be composed of the first $M$ $\lesseq_0$-minimal elements from $\restr{\str{A}}{W_1}$.
Next, let $V_\alpha^{1}$ be composed of the last $M$ $\lesseq_0$-maximal elements from $\restr{\str{A}}{W_1 \setminus V_\alpha^{0}}$, 
Similarly, let $V_\alpha^{2}$ be composed of the first $M$ $\lesseq_1$-minimal elements from $\restr{\str{A}}{W_1 \setminus (V_\alpha^{0} \cup V_\alpha^{1})}$.
Finally let $V_\alpha^{3}$ be composed of the last $M$ $\lesseq_1$-maximal elements~from~$\restr{\str{A}}{W_1 \setminus (V_\alpha^{0} \cup V_\alpha^{1} \cup V_\alpha^{2})}$.
Put $V_\alpha := \bigcup_{k=0}^{3} V_\alpha^{k}$. 
Notice that all the components of $V_\alpha$ are pairwise disjoint (by construction), and they are well-defined since we included sufficiently many elements~in~$W_1$.

Going back to the proof, we fix any element $\domelemd$ from $W_3$ that violate some of $\forall\exists$-conjuncts of $\varphi$.
Next, fix any $\forall\exists$-conjunct~$\psi := \forall{\varx}\exists{\vary}\ \gamma_i^j(x,y)$, whose satisfaction is violated by $\domelemd$.
Since $\str{A} \models \varphi$ we know that there is an element $\domeleme \in A$ such that $\domeleme$ is a $\gamma_i^j$-witness for $\domelemd$ and $\gamma_i^j$ in $\str{A}$ and let $\alpha$ be the $1$-type of $\domeleme$ in $\str{A}$. 
Observe that $\alpha$ is not rare (otherwise $\domeleme \in W_0$, and hence $\domeleme \in B$), and $\domelemd \neq \domeleme$.
Moreover either $\domeleme \lesseq_i^{\str{A}} \domelemd$ or $\domelemd \lesseq_i^{\str{A}} \domeleme$ holds. 
Thus, we take $V_{\alpha}^{2i+k}$ (where $k$ equals $0$ if $\domeleme \lesseq_i^{\str{A}} \domelemd$ and $1$ otherwise) to be the corresponding set of $M$ minimal/maximal $\lesseq_i$ realisations of $\alpha$ in the same direction to~$\domelemd$~as~$\domeleme$~is.

Now it suffices to take the $j$-th element $\domeleme_j$ from $V_{\alpha}^{2i{+}k}$ and change the binary relations between $\domelemd$ and $\domeleme_j$ in $\str{B}$ so that the equality holds $\tp{A}{\domelemd, \domeleme} = \tp{B}{\domelemd, \domeleme_j}$ holds (which can be done as $\domeleme$ and $\domeleme_j$ have equal $1$-types).
We repeat the process for all remaining $\gamma_i^j$ formulae violated by $\domelemd$.
We stress that it is not a coincidence that we use the $j$-th element $\domeleme_j$ from the corresponding set $V_{\alpha}^{2i{+}k}$ to be a fresh $\gamma_i^j$-witness for $\domelemd$: 
this guarantees that we never redefine connection between $\domelemd$ and some element twice.

Observe that all elements from $B$ that had $\gamma_i^j$-witnesses before our redefinition of certain $2$-types, still do have them (as we did do not touch $2$-types between them and their witnesses), $\str{B}$ still satisfies the $\forall\forall$-component of~$\varphi$ (since the modified $2$-type does not violate $\varphi$ in $\str{B}$ it does not violate $\varphi$ in $\str{B}$) and $\domelemd$ has all required witnesses. 
By repeating the strategy for all the other elements from $W_3$ violating $\varphi$, we obtain a desired ``small'' model of $\varphi$. 
\qedhere
\end{proof}

Lemma~\ref{lemma:small-model-property} yields an $\NExpTime$ algorithm for deciding satisfiability of $\FOt_-[\lesseq_0, \lesseq_1]$ formulae: convert an input into normal form, guess its exponential size model and verify the modelhood with a standard model-checking algorithm (in $\PTime$~\cite[Prop. 4.1]{GradelO99}). 
After applying Proposition~\ref{fact:from-inv-to-sat} we~conclude:
\begin{theorem}\label{thm:complexity}
Checking if an $\FOt[\lesseq]$-formula is $\lesseq$-invariant is $\coNExpTime$-complete.
\end{theorem}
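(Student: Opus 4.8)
The plan is to prove Theorem~\ref{thm:complexity} by combining the hardness already established in Corollary~\ref{lemma:correctness-of-the-reduction} with a matching upper bound derived from the small model property of Lemma~\ref{lemma:small-model-property}. Since Corollary~\ref{lemma:correctness-of-the-reduction} already gives $\coNExpTime$-hardness via a Turing-reduction from finite $\FOt$-validity, the remaining task is to show membership in $\coNExpTime$, \ie that testing \emph{non}-invariance is in $\NExpTime$.

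First I would invoke Fact~\ref{fact:from-inv-to-sat} to rephrase the complement of our problem: an $\FOt[\lesseq]$ formula $\varphi$ fails to be order-invariant precisely when the associated $\FOt_-[\lesseq_0, \lesseq_1]$ formula $\varphi[{\lesseq}{/}{\lesseq_0}] \land \neg \varphi[{\lesseq}{/}{\lesseq_1}]$ is finitely satisfiable over structures interpreting both $\lesseq_0$ and $\lesseq_1$ as linear orders. Thus deciding non-invariance reduces in linear time to deciding finite satisfiability of a formula in $\FOt_-[\lesseq_0, \lesseq_1]$. The next step is to give an $\NExpTime$ procedure for the latter. Here Lemma~\ref{lemma:small-model-property} does the heavy lifting: any satisfiable such formula admits a model of size $\mathcal{O}(|\varphi|^3 \cdot 2^{|\varphi|})$. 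The algorithm therefore converts the input into the normal form of Corollary~\ref{lemma:normalform} (linear time), nondeterministically guesses a structure of this exponential size --- which has an exponentially-long but still $\NExpTime$-bounded description --- and then verifies in deterministic polynomial time (in the size of the guessed structure) that it interprets $\lesseq_0, \lesseq_1$ as linear orders and satisfies the formula, using the standard $\FOt$ model-checking algorithm cited from~\cite[Prop.~4.1]{GradelO99}.

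Finally I would assemble these pieces: non-invariance is in $\NExpTime$, hence invariance is in $\coNExpTime$; together with the hardness from Corollary~\ref{lemma:correctness-of-the-reduction} this yields $\coNExpTime$-completeness. I do not expect any genuine obstacle in this assembly, as the real difficulty has been discharged inside Lemma~\ref{lemma:small-model-property}; the only points demanding mild care are confirming that a structure of size $\mathcal{O}(|\varphi|^3 \cdot 2^{|\varphi|})$ can be encoded in space exponential in $|\varphi|$ so that the nondeterministic guess stays within the $\NExpTime$ budget, and noting that verifying the linear-order constraints on $\lesseq_0, \lesseq_1$ is itself a polynomial-time check on the guessed structure. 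Both are routine, so the theorem follows directly.
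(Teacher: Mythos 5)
Your proposal is correct and follows essentially the same route as the paper: the paper likewise obtains the upper bound by combining Fact~\ref{fact:from-inv-to-sat} with the small model property of Lemma~\ref{lemma:small-model-property} to get an $\NExpTime$ guess-and-check algorithm for satisfiability of $\FOt_-[\lesseq_0, \lesseq_1]$ (normal form, exponential-size model guess, $\PTime$ model checking), and pairs it with the $\coNExpTime$-hardness of Corollary~\ref{lemma:correctness-of-the-reduction}. The only difference is that you make explicit the routine details (encoding size of the guessed structure, checking the linear-order constraints) that the paper leaves implicit.
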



\section*{Acknowledgements}
This work was supported by the ERC through 
the Consolidator Grant No.~771779 (\href{https://iccl.inf.tu-dresden.de/web/DeciGUT/en}{DeciGUT}).

\noindent I thank Antti Kuusisto and Anna Karykowska for inspiring discussions, as well as Reijo Jaakkola, Julien Grange and Emanuel Kieroński for proofreading.


\balance
\bibliographystyle{plain}
\bibliography{references}
\end{document}